\documentclass[11pt]{amsart}
\usepackage{amscd}
\usepackage{verbatim}
\usepackage{amssymb}
\usepackage{amsmath}
\usepackage{amsthm}
\usepackage{amsfonts}
\usepackage{mathrsfs}
\usepackage{amsthm}
\usepackage{euscript}

\usepackage[margin=30 mm,heightrounded=true,centering]{geometry}




\newtheorem{theorem}{Theorem}[section]
\newtheorem{proposition}[theorem]{Proposition}

\newtheorem{remark}[theorem]{Remark}

\newtheorem{definition}[theorem]{Definition}

\begin{document}

\title[Scalar-tensor gravitation and the Bakry-\'Emery-Ricci tensor]{
Scalar-tensor gravitation and the Bakry-\'Emery-Ricci tensor}

\author{Eric Woolgar}
\address{Department of Mathematical and Statistical Sciences,
University of Alberta, Edmonton, Alberta, T6G 2G1, Canada}
\email{ewoolgar(at)ualberta.ca}

\date{\today}

\begin{abstract}
\noindent The Bakry-\'Emery generalized Ricci tensor arises in
scalar-tensor gravitation theories in the conformal gauge known as
the Jordan frame. Recent results from the mathematics literature
show that standard singularity and splitting theorems that hold when
an energy condition is applied in general relativity also hold when
that energy condition is applied to the Bakry-\'Emery tensor. We
show here that a direct consequence is that the Hawking-Penrose
singularity theorem and the timelike splitting theorem hold for
scalar-tensor theory in the Jordan frame. As examples, we consider
dilaton gravity (including totally anti-symmetric torsion) and the
Brans-Dicke family of scalar-tensor theories. For Brans-Dicke theory
the theorems do not extend to cover the entire space of values of
the Brans-Dicke family parameter $\omega$, and so may fail to hold
for $\omega<-1$. Observations show that this range of values does
not describe our Universe, but the result is in accord with examples
in the literature of Brans-Dicke spacetimes that have no singularity
in the Jordan frame and do not split as a Riemannian product.
\end{abstract}

\maketitle

\section{Introduction}
\setcounter{equation}{0}

\noindent Many theorems in general relativity hold under a
positivity assumption on components of the nongravitational
stress-energy tensor, usually called an energy condition, of which
there are various inequivalent forms. Because of the Einstein
equation, this assumption then becomes a condition on the Einstein
curvature and so, ultimately, on the Ricci curvature. Once one has a
condition on Ricci curvature, one has a tool to prove theorems.
Examples of these theorems include the Hawking-Penrose singularity
theorem (\cite{HE}, p 266), and the timelike splitting theorem
\cite{Eschenburg, Galloway1}, among others. The singularity theorem
is sometimes interpreted as the statement that nonnegative Ricci
curvature generically evolves to produce singularities, while the
splitting theorem shows that the nongeneric, nonsingular cases must
have quite special geometry.\footnote
{The term \emph{generic} in this sentence will actually acquire a
precise meaning below. Also, while splitting theorems endeavour to
describe the nongeneric cases, they typically do not fully describe
them, since some assumptions are needed; in particular, the
assumption of a complete timelike line; see Section 2.}

Scalar-tensor gravitation theories may be expressed in various
conformal gauges. The two most common such gauges are known as the
\emph{Einstein frame} and the \emph{Jordan frame}. In the Einstein
frame, it is usually a simple matter to show that standard theorems
will follow from an energy condition imposed on stress-energy. Note,
however, that this description of the theory is not ``minimally
coupled''. One may therefore prefer to impose the energy condition
on the theory in the Jordan frame, but in this frame an energy
condition imposed on nongravitational stress-energy does not
directly lead to a condition on the Ricci curvature from which one
could then derive these theorems (at least, in a direct manner by
following the usual general relativity proofs).

As well, other assumptions, which limit the solution space of a
scalar-tensor theory in the necessary manner to imply that these
theorems hold,\footnote
{An example of an assumption that is not conformally invariant is
the assumption of the existence of a complete timelike line in the
splitting theorem below.}
are not invariant in form under the conformal transformation between
``frames''. The net result is that, while singularity and splitting
theorems may hold for Einstein-frame solutions under certain
assumptions, similar theorems may not hold when assumptions of the
same form (thus, not conformally transformed versions of
Einstein-frame assumptions) are applied directly to Jordan-frame
solutions of the theory.

In this short note, I point out that recent progress in the
mathematics literature means that we now have available singularity
and splitting theorems arising from energy conditions and other
assumptions applied directly in the Jordan frame. The theorems
concerned are not obtained as corollaries from easily-established
Einstein frame theorems by simple methods. An interesting feature is
that, for Brans-Dicke theory, the theorems do not cover the case of
a Brans-Dicke parameter $\omega<-1$ (and require an additional
assumption in the special case of $\omega=-1$). This is consistent
with results reported in \cite{QBC}, which found that for
4-dimensional Brans-Dicke theory in the Jordan frame with a
barotropic perfect fluid and with Brans-Dicke parameter $\omega\le
-4/3$, there are nonsingular, nonsplit solutions.

To understand the mathematical progress that facilitates this, note
that in applications that touch upon Riemannian geometry, such as
optimal transportation \cite{Villani}, generalizations of the Ricci
curvature arise. An important example is the Bakry-\'Emery (or
Bakry-\'Emery-Ricci) tensor, which augments the Ricci tensor with
another term given by the Hessian of a twice-differentiable
\emph{weight function}. Some familiar theorems in Riemannian
geometry can be modified to hold under the assumption that the
Bakry-\'Emery tensor, rather than the usual Ricci tensor, obeys a
sign condition \cite{Lott}. In Lorentzian geometry, Case \cite{Case}
has shown that a similar sign condition on timelike components of
the Bakry-\'Emery tensor---i.e., an energy condition---will, in an
analogous fashion to the Riemannian case, imply that singularity
theorems and the timelike splitting theorem hold.

In Section 2 below, we state Case's versions of the singularity and
timelike splitting theorems and then explain some of the
terminology. In particular, we define the generalized Ricci tensor
and Bakry-\'Emery tensor and discuss the energy conditions. In
Section 3 we apply these theorems to obtain Jordan-frame singularity
and timelike splitting theorems for Brans-Dicke theory and (1-loop)
dilaton gravity, including dilaton gravity with totally skew
torsion. A brief concluding section contains some final remarks.

This research was supported by a Discovery Grant from the Natural
Sciences and Engineering Research Council of Canada. The author
thanks Greg Galloway for an explanation of the timelike splitting
theorem for globally hyperbolic spacetimes, and Valerio Faraoni for
comments on a draft and for bringing the reference \cite{BIT} to his
attention.

\section{Case's theorems}
\setcounter{equation}{0}

\noindent In \cite{Case}, Case has proved the following two
theorems:
\begin{theorem}[Case's Singularity Theorem]
\label{theorem2.1} Let $M$ be a chronological spacetime with $\dim
M\ge 3$. Let $\varphi:M\to {\mathbb R}$ obey the $\varphi$-generic
condition.\footnote
{Our usage differs from that of Case, who uses ``$f$-generic'' with
$f=-\log\varphi$. The $\varphi$-generic condition is that ${\dot
\gamma}_{[i}S_{j]pq[k}{\dot \gamma}_{l]}{\dot \gamma}^p {\dot
\gamma}^q$ is nonzero somewhere along each inextendible timelike
geodesic $\gamma$ (parametrized so that $g({\dot \gamma},{\dot
\gamma})=-1$), where $S^i{}_{jkl}:=R^i{}_{jkl}-\frac{1}{(n-1)}\left
( \frac{\nabla_j\nabla_l\varphi}{\varphi}
-\frac{n}{n-1}\frac{\nabla_j\varphi\nabla_l\varphi}{\varphi^2}\right
)\left ( \delta^i_k-{\dot \gamma}^i{\dot \gamma}_k\right )$. The
$\varphi$-generic condition implies that $S_{ijkl}{\dot
\gamma}^j{\dot \gamma}^l$ will be nonzero somewhere along $\gamma$;
see Definitions 2.7 and 3.1 of \cite{Case} (but in the present paper
we use the curvature tensor and terminology conventions of
\cite{HE}).}
Assume that either
\begin{enumerate}
\item there is an integer $q>0$ such that the generalized Ricci
    tensor ${\rm GRic}[g,\varphi,q]$ obeys the energy condition,
    or
\item the Bakry-\'Emery tensor ${\rm BER}[g,\varphi]$ obeys the
    energy condition and $\varphi$ is bounded away from zero\footnote
{Equivalently, $f=-\log\varphi$ is bounded above.}
($\varphi\ge C>0$ for some $C\in{\mathbb R}$).
\end{enumerate}
Assume also that either
\begin{enumerate}
\item[a)] $M$ has a point $p$ such that, along each null
    geodesic $\gamma$ through $p$, the modified null expansion
    scalar ${\hat \theta}:=\theta+\nabla_{\dot
    \gamma}\log\varphi$ of the null geodesic congruence through
    $p$ is negative somewhere to the future or past of $p$, or
\item[b)] $M$ has a closed $\varphi$-trapped surface,
    or\footnote
{Case uses the term ``$f$-trapped'' with $f=-\log \varphi$. We
define a closed (compact, without boundary), co-dimension 2,
$C^2$ surface $\Sigma$ to be \emph{$\varphi$-trapped} if the
modified null expansion scalars ${\hat \theta}$ of the two
oppositely directed null congruences leaving it orthogonally
obey ${\hat \theta}\vert_p\le 0$ at each $p\in\Sigma$. This
amounts to saying that each congruence is initially converging,
if convergence is measured with the rescaled metric ${\hat
g}=\varphi^{2/(n-2)}g$.}
\item[c)] $M$ has a compact spacelike hypersurface.
\end{enumerate}
Then $(M,g)$ is nonspacelike geodesically incomplete,
\end{theorem}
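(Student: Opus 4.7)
My plan is to imitate the classical Hawking-Penrose argument (see \cite{HE}, Chapter~8) with two modifications tailored to the weighted geometry: the standard Raychaudhuri inequality is replaced by a weighted version in which the role of $\mathrm{Ric}$ is played by $\mathrm{BER}[g,\varphi]$ or $\mathrm{GRic}[g,\varphi,q]$, and the mean-curvature scalar $\theta$ is replaced throughout by the modified expansion $\hat\theta=\theta+\nabla_{\dot\gamma}\log\varphi$. The causality-theoretic skeleton of the argument is kept exactly as in \cite{HE}; only the focusing step and the definition of ``trapped'' need to be re-weighted.

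The first substantive step is a focusing lemma: along any complete causal geodesic $\gamma$, the modified expansion $\hat\theta$ of any hypersurface-orthogonal congruence containing $\gamma$ must diverge to $-\infty$ in finite affine time. Starting from the unweighted Raychaudhuri equation, I would rewrite it in terms of $\hat\theta$ using the definitions of $\mathrm{BER}$ and $\mathrm{GRic}$ to obtain a Riccati-type inequality of the schematic form $\tfrac{d\hat\theta}{ds}\le-\tfrac{\hat\theta^2}{N-2}-\mathrm{BER}(\dot\gamma,\dot\gamma)-\sigma^2$, with effective dimension $N=n+q$ in the $\mathrm{GRic}$ case. The energy condition makes the right-hand side sign-definite modulo the quadratic term, and the $\varphi$-generic condition plays the role of the usual generic condition, guaranteeing a nonzero source term somewhere along $\gamma$ so that even an initially vanishing $\hat\theta$ is driven negative; standard Riccati comparison then forces blow-up in finite affine parameter.

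The second step feeds one of the trapping hypotheses (a)--(c) into the classical causality machinery. Each is tailored so that the achronal boundaries $\partial J^{\pm}(S)$ of the trapped set $S$ (a point, a surface, or a Cauchy-like slice) are generated by null segments along which the focusing lemma applies with \emph{initial} $\hat\theta\le 0$. Completeness together with focusing then forces these generators to leave $\partial J^{\pm}(S)$ within a uniformly bounded affine time, so the achronal boundaries are compact. Combining compactness with chronology as in \cite{HE} produces a future- or past-inextendible timelike curve that, in the complete case, yields a length-maximizing causal segment long enough to contain a focal point---the desired contradiction. Hence $(M,g)$ must be nonspacelike geodesically incomplete.

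The main obstacle I anticipate lies in the focusing step under alternative~(2). Since $\mathrm{BER}=\mathrm{Ric}+\mathrm{Hess}(f)$ with $f=-\log\varphi$, the substitution from $\theta$ to $\hat\theta$ in the Raychaudhuri equation generates a residual term involving $\nabla_{\dot\gamma}^{2}f$ that is not absorbed algebraically; controlling its contribution over arbitrarily long affine intervals requires a one-sided bound on $f$, i.e.\ $\varphi$ bounded away from zero, which is exactly the extra hypothesis in (2). In the $\mathrm{GRic}$ alternative~(1) this difficulty is repackaged: the integer $q>0$ effectively raises the Raychaudhuri denominator from $n-2$ to $n+q-2$, serving as a Bakry-\'Emery synthetic dimension and delivering a clean Riccati comparison with no boundedness assumption. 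Apart from this focusing-step subtlety, all other steps coincide with the classical proof in \cite{HE}.
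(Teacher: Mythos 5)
This statement is not proved in the paper at all: it is Case's theorem, quoted (with the substitution $f=-\log\varphi$ and a change of curvature conventions) from \cite{Case}, so there is no in-paper argument to compare yours against. Judged on its own terms, your plan does follow the strategy of the cited proof: keep the Hawking--Penrose causal-structure machinery of \cite{HE} intact, and replace the focusing step by a weighted Raychaudhuri inequality for $\hat\theta$ in which the curvature term is ${\rm BER}$ or ${\rm GRic}$, with $q$ acting as a synthetic extra dimension in case (1) and the bound $\varphi\ge C>0$ doing the work in case (2). One technical point is misdiagnosed, however. You name the ``residual term involving $\nabla^{2}_{\dot\gamma}f$'' as the obstacle in case (2), but that term is exactly the one that \emph{is} absorbed algebraically: differentiating $\hat\theta=\theta-\nabla_{\dot\gamma}f$ turns $-{\rm Ric}(\dot\gamma,\dot\gamma)-{\rm Hess}(f)(\dot\gamma,\dot\gamma)$ into $-{\rm BER}(\dot\gamma,\dot\gamma)$ on the nose. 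The genuine obstruction is the quadratic term: $\theta^{2}=\bigl(\hat\theta+\nabla_{\dot\gamma}f\bigr)^{2}$ produces the first-order pieces $2\hat\theta\,\nabla_{\dot\gamma}f+(\nabla_{\dot\gamma}f)^{2}$, which cannot be dominated by $\hat\theta^{2}$ alone; it is these that the upper bound on $f$ (equivalently $\varphi$ bounded away from zero) must control over long affine intervals, whereas in case (1) the extra piece $-\tfrac1q\,df\otimes df$ inside ${\rm GRic}$ supplies precisely the $(\nabla_{\dot\gamma}f)^{2}$ needed to complete the square and obtain the clean comparison $\hat\theta'\le-\hat\theta^{2}/(n+q-1)-{\rm GRic}(\dot\gamma,\dot\gamma)$ (denominator $n+q-2$ in the null case). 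With that correction, and with the understanding that the causality-theoretic half of the argument is being invoked wholesale rather than reproved, your outline is consistent with the proof in \cite{Case}.
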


\begin{theorem}[Case's Timelike Splitting Theorem]
\label{theorem2.2} Let $(M,g)$ be a connected spacetime such that
\begin{enumerate}
\item $(M,g)$ is either timelike geodesically complete or
    globally hyperbolic,
\item $M$ contains a complete timelike line,\footnote
{Definition: a \emph{timelike line} $\gamma$ is an inextendible
timelike geodesic such that, for each pair of points $p,q$ along
$\gamma$ and every piecewise-smooth timelike curve joining these
points, the proper time interval along such curves from $p$ to
$q$ is maximized by $\gamma$. For the theorem, the line must be
\emph{complete}, meaning that its affine parameter takes values
throughout all of ${\mathbb R}$.}
and
\item either
\begin{enumerate}
\item there is an integer $q>0$ such that the generalized
    Ricci tensor ${\rm GRic}[g,\varphi,q]$ obeys the energy
    condition, or
\item the Bakry-\'Emery tensor ${\rm BER}[g,\varphi]$ obeys
    the energy condition and $\varphi$ is bounded away from zero.
\end{enumerate}
\end{enumerate}
Then $(M,g)$ is isometric to $({\mathbb R}\times \Sigma, -dt^2\oplus
h)$, where $(\Sigma,h)$ is a complete Riemannian manifold and
$\varphi$ is constant along ${\mathbb R}$.
\end{theorem}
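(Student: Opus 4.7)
My plan is to adapt the Eschenburg--Galloway--Newman proof of the Lorentzian splitting theorem to the Bakry--\'Emery setting, using Busemann functions built from the complete timelike line. Let $\gamma:{\mathbb R}\to M$ be the line provided by hypothesis (2), and set
\[
b^{+}(x) = \lim_{t\to+\infty}\bigl(t-d(x,\gamma(t))\bigr),\qquad b^{-}(x) = \lim_{t\to-\infty}\bigl(-t-d(\gamma(t),x)\bigr).
\]
The standard arguments, which use only global hyperbolicity or timelike geodesic completeness, show that $b^{\pm}$ are finite and continuous on an open neighbourhood $U$ of $\gamma$, that $b^{+}+b^{-}\le 0$ on $U$, and that equality holds along $\gamma$ itself. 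The target is to promote this equality to a global identity and extract a parallel unit timelike gradient from it.

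The analytic input replacing the classical Ricci Raychaudhuri step is a weighted Raychaudhuri inequality for the modified expansion $\hat\theta := \theta + \nabla_u\log\varphi$ of a unit timelike, hypersurface-orthogonal congruence with tangent $u$. A direct computation gives, along each geodesic,
\[
\frac{d\hat\theta}{d\tau} \le -\,{\rm GRic}[g,\varphi,q](u,u) - \frac{u(\log\varphi)^2}{q} - \frac{\theta^2}{n-1},
\]
and in case (a) the Cauchy--Schwarz bound $\frac{\theta^2}{n-1}+\frac{u(\log\varphi)^2}{q}\ge \frac{\hat\theta^2}{n+q-1}$ converts this into a clean Riccati inequality in effective dimension $n+q$. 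In case (b) one instead reads the calculation as $\frac{d\hat\theta}{d\tau}\le -\,{\rm BER}(u,u)-\frac{\theta^2}{n-1}$, and the hypothesis $\varphi\ge C>0$ is what allows one to close the estimate along asymptotes by bounding the $u(\log\varphi)$ drift over the length of the line. Translating the resulting bounds on $\hat\theta$ into statements about support functions of the Busemann functions gives subsolution/supersolution inequalities for $b^{+}$ and $-b^{-}$ with respect to the weighted d'Alembertian naturally associated to $\varphi$ (respectively its $q$-analogue).

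With these in place I would apply the Calabi/Eschenburg strong maximum principle in its Lorentzian support-function formulation: sub- and supersolutions of a common uniformly elliptic operator that touch from the correct side must coincide locally. Since $b^{+}\le -b^{-}$ with equality on $\gamma$, this forces $b^{+}=-b^{-}$ on a neighbourhood of $\gamma$, and elliptic regularity then promotes $b:=b^{+}$ to a smooth function with $g(\nabla b,\nabla b) = -1$. The equality case of the weighted Raychaudhuri inequality forces $\nabla^2 b = 0$ \emph{and} $\nabla_{\nabla b}\varphi = 0$; a connectedness/propagation argument along asymptotes extends both identities to all of $M$. The de~Rham decomposition theorem applied to the resulting parallel timelike vector field $\nabla b$ produces the isometry $(M,g)\cong({\mathbb R}\times\Sigma,-dt^2\oplus h)$ with $(\Sigma,h)$ complete, while $\nabla_{\nabla b}\varphi=0$ is precisely the statement that $\varphi$ is constant along the ${\mathbb R}$-factor.

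The main obstacle is the strong maximum principle step. The Busemann functions are a priori only continuous, so the sub/super-solution inequalities have to be interpreted via upper and lower support functions, and the $\varphi$-weighting means the comparison barriers in the Calabi trick have to be chosen with care. This is exactly where the hypothesis $\varphi\ge C>0$ in case (b) enters essentially: without it, the $u(\log\varphi)$ contribution in the Raychaudhuri calculation is uncontrolled, the effective operator loses uniform ellipticity on the weighted slices, and the support functions need not give clean subsolutions. In case (a) the difficulty is considerably milder, because the generalized Ricci tensor absorbs the weight into $q$ extra effective dimensions and the classical argument essentially goes through with $n$ replaced by $n+q$.
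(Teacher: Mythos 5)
The paper itself contains no proof of Theorem \ref{theorem2.2}: the statement is imported verbatim from Case \cite{Case} and is only applied in Section 3. Your sketch therefore has to be measured against the proof in that reference, and it is essentially that proof: one adapts the Eschenburg--Galloway--Newman Busemann-function argument by replacing the Raychaudhuri step with a weighted one for $\hat\theta=\theta+\nabla_u\log\varphi$, which is exactly the inequality you write down; the Cauchy--Schwarz absorption $\frac{\theta^2}{n-1}+\frac{1}{q}\left(\nabla_u\log\varphi\right)^2\ge\frac{\hat\theta^2}{n+q-1}$ correctly turns ${\rm GRic}[g,\varphi,q]$ into an effective Ricci tensor in dimension $n+q$, and your reading of the equality case (forcing $\nabla^2b=0$ \emph{and} $\nabla_{\nabla b}\varphi=0$ simultaneously, then de~Rham) is the right endgame. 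For comparison, there is a second, more elementary route, going back to Lott \cite{Lott} in the Riemannian case and sketched in unused draft material appended to this paper's source: form the warped product ${\mathcal M}=M\times S^q$ with metric $g\oplus\frac{1}{k^2}\varphi^{2/q}h$, check that the energy condition on ${\rm GRic}[g,\varphi,q]$ together with a bound of the form $\square\varphi\le qC\varphi^{1-2/q}$ makes ${\rm Ric}[\gamma]$ obey the energy condition upstairs, lift the timelike line, apply the classical splitting theorem on ${\mathcal M}$, and project out the sphere. That route uses the classical theorem as a black box at the price of auxiliary hypotheses on $\varphi$; yours avoids those but requires redoing the comparison theory from scratch.

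Two soft spots in your write-up. First, in case (b) the quadratic term in your inequality is $\theta^2/(n-1)$, not a multiple of $\hat\theta^2$, so the Riccati comparison for $\hat\theta$ does not close on its own; moreover $\varphi\ge C>0$ gives only a \emph{one-sided} bound (it says $f=-\log\varphi$ is bounded above), and what it actually controls is the integral $\int\nabla_u\log\varphi\,d\tau=\log\varphi\big|_{\tau_0}^{\tau_1}$ along co-rays, bounded from below only. This one-sided control, run in the correct time direction, is what the argument genuinely uses; ``bounding the drift'' as you phrase it overstates what the hypothesis provides. Second, the timelike-geodesically-complete (non--globally-hyperbolic) alternative in hypothesis (1) requires the full co-ray and support-function machinery to get regularity of the Busemann functions without Cauchy surfaces; your phrase ``standard arguments'' is doing real work there. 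Neither point is a wrong turn, but both must be filled in before the sketch becomes a proof.
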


We now undertake to explain some of the terminology used in these
theorems. First, for $(M,g)$ a Lorentzian $n$-manifold and
$\varphi$ a twice-differentiable function, we can define a family,
parametrized by $q$, of generalizations of the Ricci tensor (see,
e.g., \cite{Villani}). Taking $q\to\infty$ this yields the
Bakry-\'Emery tensor.
\begin{definition}\label{definition2.3}
Let $\varphi:M\to(0,\infty)$ be twice differentiable and let
$q\in(0,\infty)$. The \emph{generalized Ricci tensor} ${\rm
GRic}[g,\varphi,q]$, also denoted $G_{ij}$, is the tensor
\begin{eqnarray}
{\rm GRic}[g,\varphi,q]&:=&{\rm Ric}[g]-{\rm Hess}(\log \varphi)
-\frac1q \nabla \log \varphi \otimes \nabla \log \varphi\ ,\nonumber\\
&\equiv&{\rm Ric}[g]-\frac{{\rm Hess}(\varphi)}{\varphi}
+\left ( 1-\frac1q \right ) \frac{\nabla \varphi \otimes \nabla
\varphi}{\varphi^2} \ , \label{eq2.1}
\end{eqnarray}
where ${\rm Hess}(\varphi):=\nabla \nabla \varphi$ is the Hessian of
$\varphi$. The \emph{Bakry-\'Emery-Ricci tensor} ${\rm
BER}[g,\varphi]$, also denoted $B_{ij}$, is defined by
\begin{eqnarray}
{\rm BER}[g,\varphi]&:=&{\rm Ric}[g]-{\rm Hess}(\log \varphi)
\ ,\nonumber\\
&\equiv&{\rm Ric}[g]-\frac{{\rm Hess}(\varphi)}{\varphi}
+ \frac{\nabla \varphi \otimes \nabla \varphi}{\varphi^2}
\ . \label{eq2.2}
\end{eqnarray}
\end{definition}

The theorems above assume that either ${\rm BER}$ or ${\rm GRic}$
obeys what is called the energy condition. We now define energy
conditions. Our first such condition will apply to any
$(0,2)$-tensor:
\begin{definition}\label{definition2.4}
We say that a $(0,2)$-tensor $S$ obeys the \emph{energy condition}
on $(M,g)$ if, for every $p\in M$ and every timelike vector $t\in
T_pM$, then $S\big\vert_p(t,t)\ge 0$.
\end{definition}
\noindent By continuity, if a $(0,2)$-tensor $S$ obeys the energy
condition, then $S(l,l)\ge 0$ for all null vectors $l$ as well.

\noindent In general relativity, it is common to apply the energy
condition directly to the stress-energy tensor $T_{ij}$ of the
theory. This is then called the \emph{weak energy condition}, and in
general relativity it implies that the Einstein tensor obeys the
energy condition. If the energy condition is instead applied to
$T_{ij}-\frac{1}{n-2}g_{ij}T$, it is called the \emph{strong energy
condition}, and in general relativity this implies that the Ricci
tensor obeys the energy condition. With that in mind, assume now
that we are provided with a distinguished tensor $T_{ij}$ called the
stress-energy tensor, whether we are working in general relativity
or within a more general Lorentzian framework, such as that provided
another gravitation theory. We formulate an energy condition on
$T_{ij}$ that is natural in Brans-Dicke theory and which contains
the weak and strong energy conditions as special cases:
\begin{definition}\label{definition2.5}
We say the \emph{$\omega$-energy condition} holds for a given
$\omega>-(n-1)/(n-2)$ if
\begin{equation}
\left ( T_{ij}-\frac{(1+\omega)}{\left [ n-1+(n-2)\omega\right ] }
g_{ij}T\right )t^it^j\ge 0\label{eq2.4}
\end{equation}
for all timelike vectors $t^i\in T_pM$ and all $p\in M$. Taking
$\omega\to\infty$, we simply say that the \emph{strong energy
condition} holds if
\begin{equation}
\left ( T_{ij}-\frac{1}{n-2}g_{ij}T\right )t^it^j\ge 0 \label{eq2.5}
\end{equation}
for all timelike vectors $t^i\in T_pM$ and all $p\in M$. On the
other hand, we say the \emph{weak energy condition} holds if the
$\omega$-energy condition holds for $\omega=-1$; that is, if
\begin{equation}
T_{ij}t^it^j\ge 0 \label{eq2.3}
\end{equation}
for all timelike vectors $t^i\in T_pM$ and all $p\in M$.
\end{definition}
\noindent That is, the weak energy condition holds if the energy
condition holds for $S_{ij}=T_{ij}$, and the strong energy condition
holds when the energy conditions holds for
$S_{ij}=T_{ij}-\frac{1}{n-2}g_{ij}T$. The terms \emph{weak, strong,}
and \emph{$\omega$-energy condition} will always refer to $T_{ij}$,
whereas we have used \emph{energy condition} to refer to any tensor.

The $\omega$-energy condition holds for all $\omega\ge -1$ if and
only if matter obeys both the weak and the strong energy condition.

The $\omega$-energy condition for any $\omega$ reduces to the weak
energy condition if matter consists only of massless radiation,
since then $T=0$.

\section{Applications to scalar-tensor gravitation}
\setcounter{equation}{0}

\subsection{Dilaton gravity}

\noindent It is believed that low energy string theory is described
by a nonlinear sigma model at a fixed point of its renormalization
group flow. The precise model depends on whether torsion in the
guise of the so-called ${\mathcal B}$-field is present,\footnote
{Most authors use $B$ for this field, but we use
${\mathcal B}$ to distinguish it from the Bakry-\'Emery tensor $B_{ij}$.}
and whether the string theory in question is bosonic,
supersymmetric, or heterotic. The fixed point condition means that
the so-called beta-functions of the theory vanish. Of particular
interest will be the condition for the vanishing of the graviton
beta-function, $\beta_g$. This is a condition on the metric of the
target manifold and can be expressed as
\begin{equation}
0=\beta_g\equiv \alpha' \left (R_{ij}+2\nabla_i\nabla_j \Phi
- \frac14 {\mathcal H}_{ikl}{\mathcal H}_j{}^{kl}\right )
+{\mathcal O}(\alpha'^2) \ . \label{eq3.1}
\end{equation}
Here $\alpha'$ is a constant, $\Phi$ is a scalar field called the
{\it dilaton field}, and the 3-form ${\mathcal H}$ is the field
strength tensor ${\mathcal H}=d{\mathcal B}$ of a 2-form field
${\mathcal B}$. The ${\mathcal O}(\alpha'^2)$ terms are in fact a
power series in $\alpha'$, and for the bosonic nonlinear sigma model
the terms consist of terms of quadratic and higher order in the
Riemann tensor, $\nabla\Phi$, ${\mathcal H}$, and their derivatives
(for other models, additional fields can eventually appear). These
terms are exactly determined, but are known only to rather low order
in $\alpha'$ (the precise order depends on the sigma model).
Therefore, we will write (\ref{eq3.1}) in the form
\begin{equation}
R_{ij}+2\nabla_i\nabla_j \Phi
=\frac14 {\mathcal H}_{ikl}{\mathcal H}_j{}^{kl}
+8\pi\alpha' \tau_{ij}\ , \label{eq3.2}
\end{equation}
where $\tau_{ij}$ is a power series in $\alpha'$ that would be
completely determined if we had the ability to compute $\beta_g$ to
all orders in $\alpha'$. Instead of $\tau_{ij}$, to make contact
with general relativity we will use the corresponding
``stress-energy tensor'' $T_{ij}$ defined by
$T_{ij}:=\tau_{ij}-\frac12 g_{ij} \tau$ with $\tau:=g^{ij}T_{ij}$.
We also prefer to replace the dilaton by
\begin{equation}
\varphi=e^{-2\Phi}\ . \label{eq3.3}
\end{equation}
The left-hand side of (\ref{eq3.2}) becomes
\begin{equation}
R_{ij}+2\nabla_i\nabla_j\Phi=R_{ij}-\frac{1}{\varphi}\nabla_i\nabla_j
\varphi +\frac{1}{\varphi^2}\nabla_i\varphi\nabla_j\varphi
\equiv B_{ij}[g,\varphi]\ , \label{eq3.4}
\end{equation}
and so the condition for the vanishing of $\beta_g$ becomes
\begin{equation}
B_{ij}[g,\varphi]
=\frac14 {\mathcal H}_{ikl}{\mathcal H}_j{}^{kl}+8\pi\tau_{ij}
\equiv \frac14 {\mathcal H}_{ikl}{\mathcal H}_j{}^{kl}
+8\pi\alpha' \left ( T_{ij}
-\frac{1}{n-2}g_{ij}T\right )\ . \label{eq3.5}
\end{equation}

\begin{proposition}\label{proposition3.1}
Let $(M,g)$ be a chronological spacetime with $\dim M\ge 3$. Say
that $T_{ij}$ obeys the strong energy condition, that ${\mathcal
H}_{ijk}$ is zero or totally skew, that (\ref{eq3.5}) holds, that
there is a $C\in {\mathbb R}$, $C>0$, such that $\varphi(p)\ge C$
for all $p\in M$, and that the $\varphi$-generic condition holds.
Assume further that at least one of the conditions (a), (b), or (c)
from Theorem \ref{theorem2.1} hold. Then $(M,g)$ is nonspacelike
geodesically incomplete.
\end{proposition}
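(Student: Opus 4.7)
The plan is to apply Case's Singularity Theorem (Theorem~\ref{theorem2.1}) via option~(2) of its curvature hypothesis, i.e.\ that ${\rm BER}[g,\varphi]$ obeys the energy condition and $\varphi$ is bounded away from zero. The lower bound on $\varphi$, the chronological assumption, the dimension requirement, the $\varphi$-generic condition, and one of the auxiliary conditions (a)--(c) are all supplied directly. All that remains is to verify $B_{ij}[g,\varphi]\,t^it^j\ge 0$ for every timelike $t\in T_pM$ and every $p\in M$.

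Contracting (\ref{eq3.5}) with $t^it^j$ splits the right-hand side into a stress-energy piece $8\pi\alpha'\bigl(T_{ij}-\tfrac{1}{n-2}g_{ij}T\bigr)t^it^j$ and an $\mathcal H$-piece $\tfrac14 {\mathcal H}_{ikl}{\mathcal H}_j{}^{kl}t^it^j$. The first is nonnegative immediately from the strong energy condition (with the usual sign convention $\alpha'>0$). For the second, I would set $\omega_{kl}:={\mathcal H}_{ikl}t^i$, so the expression becomes $\tfrac14\omega_{kl}\omega^{kl}$. Because $\mathcal H$ is totally antisymmetric, $\omega$ is antisymmetric in $k,l$, and in an orthonormal frame at $p$ in which $t\propto e_0$, total skewness forces the mixed components $\omega_{0k}$ to vanish. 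Thus $\omega$ is supported on purely spatial index pairs; raising those indices uses only the positive-definite spatial part of the inverse metric, so $\omega_{kl}\omega^{kl}$ reduces to a sum of squares and is nonnegative.

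Adding the two nonnegative contributions yields $B_{ij}[g,\varphi]\,t^it^j\ge 0$, so ${\rm BER}[g,\varphi]$ satisfies the energy condition. Together with the assumed lower bound $\varphi\ge C>0$, the $\varphi$-generic condition, and whichever of (a)--(c) is in force, Theorem~\ref{theorem2.1} then delivers the claimed nonspacelike geodesic incompleteness of $(M,g)$.

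The only step with any substance is the sign of $\omega_{kl}\omega^{kl}$, and the total-skewness hypothesis on $\mathcal H$ is precisely what makes it work: were $\mathcal H$ permitted a nontrivial symmetric component, the quadratic form ${\mathcal H}_{ikl}{\mathcal H}_j{}^{kl}t^it^j$ would generally be indefinite, and one could no longer combine it with the strong energy condition on $T_{ij}$ to produce a Bakry-\'Emery energy condition. Everything else is an essentially automatic translation of the fixed-point equation (\ref{eq3.5}) into the hypotheses of Theorem~\ref{theorem2.1}.
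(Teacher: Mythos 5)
Your proposal is correct and follows essentially the same route as the paper: contract (\ref{eq3.5}) with $t^it^j$, use the strong energy condition for the $\tau$-piece, and use an orthonormal frame aligned with $t$ together with total skewness (so no $\mathcal{H}$ factor can carry more than one $0$-index) to reduce the $\mathcal{H}$-term to a sum of squares, whence ${\rm BER}[g,\varphi]$ obeys the energy condition and Theorem~\ref{theorem2.1}(2) applies with the bound $\varphi\ge C>0$. Your additional remark on why total skewness is essential, and the implicit convention $\alpha'>0$, are consistent with the paper's (unstated) assumptions.
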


\begin{proof}
Choose any timelike vector $t^i$ and construct an
orthonormal frame $\{e_0,e_{\alpha}:\alpha=1,\dots ,n-1\}$ aligned
with $t^i$, so $t^i=|t|e_0$ where $|t|:=\sqrt{-g_{ij}t^it^j}$. For
totally skew torsion, we then have ${\mathcal H}_i{}^{kl}{\mathcal
H}_{jkl}t^it^j=+|t|^2({\mathcal H}_{0\alpha\beta})^2$ since total
skewness implies that neither of the ${\mathcal H}$ factors can have
more than one $0$-index. Using this and the strong energy condition,
we see that the right-hand side of (\ref{eq3.5}) is nonnegative
whenever it is contracted against $t^it^j$ for any timelike vector
$t^i$. Then the result is an immediate consequence of Theorem
\ref{theorem2.1}.
\end{proof}

\begin{proposition}\label{proposition3.2}
Let $\dim M\ge 3$. Say that $T_{ij}$ obeys the strong energy
condition, that ${\mathcal H}_{ijk}$ is zero or totally skew, that
${\rm BER}$ is given by (\ref{eq3.5}), and that there is a $C\in
{\mathbb R}$, $C>0$, such that $\varphi(p)\ge C$ for all $p\in M$.
If $(M,g)$ is either globally hyperbolic or timelike geodesically
complete, and admits a complete timelike line, then $(M,g)\simeq
({\mathbb R}\times {\hat M}, -dt^2+{\hat g})$, so $(M,g)$ splits off
the timelike line and $\varphi$ is constant.
\end{proposition}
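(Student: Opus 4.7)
The plan is simply to verify hypothesis 3(b) of Theorem \ref{theorem2.2}. Hypotheses (1) and (2) of that theorem are among the hypotheses of the proposition, and the bound $\varphi \ge C > 0$ in part (b) is assumed directly; the entire task therefore reduces to checking that ${\rm BER}[g,\varphi]$ obeys the energy condition of Definition \ref{definition2.4}.

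By (\ref{eq3.5}), this amounts to showing that
\[
\left ( \frac14 {\mathcal H}_{ikl}{\mathcal H}_j{}^{kl} + 8\pi\alpha' \left ( T_{ij}-\frac{1}{n-2}g_{ij}T\right )\right ) t^i t^j \ge 0
\]
for every timelike $t^i \in T_pM$ and every $p \in M$. The second summand is nonnegative by the assumed strong energy condition on $T_{ij}$. For the first, I would recycle the orthonormal-frame computation from the proof of Proposition \ref{proposition3.1}: align $e_0$ with $t$ so that $t = |t|e_0$, and use total skewness of ${\mathcal H}$ to kill every contraction in which either ${\mathcal H}$-factor carries more than one timelike index, leaving $|t|^2 ({\mathcal H}_{0\alpha\beta})^2 \ge 0$. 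Thus ${\rm BER}(t,t) \ge 0$ for every timelike $t$.

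With the energy condition on ${\rm BER}$ established and $\varphi$ bounded away from zero, part 3(b) of Theorem \ref{theorem2.2} applies and delivers the isometric splitting $(M,g) \simeq ({\mathbb R} \times {\hat M}, -dt^2 + {\hat g})$ together with the constancy of $\varphi$ along the ${\mathbb R}$ factor. No substantive obstacle is present: the argument is essentially the proof of Proposition \ref{proposition3.1} with Case's singularity theorem replaced by his splitting theorem. The two load-bearing hypotheses are the total skewness of ${\mathcal H}$ (a generic 3-tensor would produce no sign when contracted with a timelike vector) and $\varphi \ge C > 0$, which is needed because (\ref{eq3.5}) produces only the Bakry-\'Emery tensor and not any of the ${\rm GRic}[g,\varphi,q]$ with finite $q$, so option 3(a) of Theorem \ref{theorem2.2} is unavailable.
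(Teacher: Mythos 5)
Your proposal is correct and follows exactly the route the paper takes: the paper's own proof of Proposition \ref{proposition3.2} simply says to repeat the orthonormal-frame positivity calculation from Proposition \ref{proposition3.1} and then invoke Theorem \ref{theorem2.2} (option (b), using $\varphi\ge C>0$) in place of Theorem \ref{theorem2.1}. You have merely written out the details that the paper leaves implicit; nothing is missing.
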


\begin{proof}
This follows from the same calculation as in the previous proof, but
now we apply Theorem \ref{theorem2.2} instead of Theorem
\ref{theorem2.1}.

\end{proof}

\subsection{Brans-Dicke theory in the Jordan frame}

\noindent The Brans-Dicke theory \cite{BD} on a manifold $M$ is
actually a family of theories, parametrized by the Brans-Dicke
parameter $\omega> -\left ( \frac{n-1}{n-2}\right )$. There are two
gravitational fields, a Lorentzian metric tensor $g_{ij}$ and a
scalar field $\varphi$ which is taken to be everywhere positive.
These fields obey the system of equations (\cite{Will}, p 123)
\begin{equation}
R_{ij}-\frac12 g_{ij}R=\frac{8\pi}{\varphi}T_{ij}
+\frac{\omega}{\varphi^2}\left ( \nabla_i\varphi\nabla_j\varphi
-\frac12 g_{ij}g^{kl}\nabla_k\varphi\nabla_l\varphi \right )
+\frac{1}{\varphi} \left ( \nabla_i\nabla_j \varphi
-g_{ij}\square \varphi \right ) \ , \label{eq3.6}
\end{equation}
\begin{equation}
\square \varphi -\frac{1}{2\varphi}g^{kl}\nabla_k\varphi\nabla_l\varphi
+\frac{1}{2\omega} R \varphi =0\ , \label{eq3.7}
\end{equation}
where $\square\varphi:=\frac{1}{\sqrt{-g}}\partial_i \left (
\sqrt{-g} g^{ij}\partial_j \phi \right
)=g^{ij}\nabla_i\nabla_j\varphi$ is the d'Alembertian of $\varphi$.

Conformal transformations will change the form of these equations.
The conformal choice that leads to the above form is usually called
the {\it Jordan frame} in the literature. Note that notions like
\emph{closed trapped surface} and \emph{timelike line} are
\emph{not} conformally invariant.

Equation (\ref{eq3.6}) can be rewritten as
\begin{equation}
R_{ij}-\frac{1}{\varphi}\nabla_i\nabla_j\varphi = \frac{8\pi}{\varphi}
\left ( T_{ij}-\frac{1}{(n-2)}g_{ij}T\right ) +\frac{\omega}{\varphi^2}
\nabla_i\varphi\nabla_j\varphi +\frac{1}{(n-2)}\frac{\square\varphi}{\varphi}
g_{ij}\ . \label{eq3.8}
\end{equation}
We can also rewrite (\ref{eq3.7}) as
\begin{equation}
\square\varphi =\frac{8\pi T}{\left [ n-1 +(n-2)\omega \right ]}
\ . \label{eq3.9}
\end{equation}
Inserting this in (\ref{eq3.8}) yields
\begin{equation}
R_{ij}-\frac{1}{\varphi}\nabla_i\nabla_j\varphi
-\frac{\omega}{\varphi^2} \nabla_i\varphi\nabla_j\varphi
= \frac{8\pi}{\varphi}\left ( T_{ij}-\frac{(1+\omega)}{\left [
n-1+(n-2)\omega\right ] }g_{ij}T\right )\ . \label{eq3.10}
\end{equation}

\begin{proposition}\label{proposition3.3}
Let $(M,g)$ be a chronological spacetime with $\dim M\ge 3$. Say
that either (i) for some fixed $\omega> -1$, the pair $(g,\varphi)$
obeys the system (\ref{eq3.6}, \ref{eq3.7}), where $T_{ij}$ obeys
the $\omega$-energy condition, or (ii) for $\omega=-1$ the pair
$(g,\varphi)$ obeys the system (\ref{eq3.6}, \ref{eq3.7}) where
$T_{ij}$ obeys the weak energy condition and $\varphi\ge C >0$ for
some constant $C$. Assume further that the $\varphi$-generic
condition holds and that at least one of the conditions (a), (b), or
(c) from Theorem \ref{theorem2.1} hold. Then $(M,g)$ is nonspacelike
geodesically incomplete.
\end{proposition}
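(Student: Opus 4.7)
The plan is to rewrite the Brans-Dicke equations (\ref{eq3.6})--(\ref{eq3.7}) so that the left-hand side is either a generalized Ricci tensor ${\rm GRic}[g,\varphi,q]$ or the Bakry-\'Emery tensor ${\rm BER}[g,\varphi]$, and then invoke Case's singularity theorem (Theorem \ref{theorem2.1}). The essential algebraic step has already been done in (\ref{eq3.10}); what remains is pattern-matching against Definition \ref{definition2.3} in each of the two cases (i) and (ii), together with use of the $\omega$-energy condition (Definition \ref{definition2.5}) to turn the right-hand side of (\ref{eq3.10}) into a sign statement.

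In case (i), where $\omega>-1$, I set $q:=1/(1+\omega)\in(0,\infty)$, so that $1-1/q=-\omega$. Comparing (\ref{eq3.10}) with (\ref{eq2.1}) then gives
$${\rm GRic}[g,\varphi,q] = \frac{8\pi}{\varphi}\left( T_{ij}-\frac{(1+\omega)}{n-1+(n-2)\omega}\,g_{ij}T\right).$$
Since $\varphi>0$ on $M$ by the standing hypotheses of Brans-Dicke theory, the $\omega$-energy condition yields ${\rm GRic}[g,\varphi,q](t,t)\ge 0$ for every timelike $t\in T_pM$ and every $p\in M$; that is, ${\rm GRic}[g,\varphi,q]$ obeys the energy condition, which is hypothesis (1) of Theorem \ref{theorem2.1}. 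In case (ii), where $\omega=-1$, the coefficient $(1+\omega)/[n-1+(n-2)\omega]$ on the right of (\ref{eq3.10}) vanishes, and comparison with (\ref{eq2.2}) identifies the left-hand side as ${\rm BER}[g,\varphi]$, giving ${\rm BER}[g,\varphi]=(8\pi/\varphi)T_{ij}$. The weak energy condition together with $\varphi>0$ then yields ${\rm BER}(t,t)\ge 0$, and the assumption $\varphi\ge C>0$ supplies the remaining ingredient of hypothesis (2). In both cases, the $\varphi$-generic condition and one of (a), (b), (c) are explicit hypotheses of the proposition, so Theorem \ref{theorem2.1} applies directly and delivers nonspacelike geodesic incompleteness.

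The only real point of caution, and so the main obstacle to converting this plan into a proof, is to confirm that Theorem \ref{theorem2.1} part (1) is genuinely available for $q=1/(1+\omega)$, which traces out all of $(0,\infty)$ as $\omega$ varies over $(-1,\infty)$ rather than merely taking positive integer values. This is consistent with Definition \ref{definition2.3}, which already permits $q\in(0,\infty)$, and with the standard Bakry-\'Emery literature, so the appeal to Case's theorem should go through verbatim. As a minor sanity check, the constraint $\omega>-(n-1)/(n-2)$ needed for the $\omega$-energy condition to be well-defined (i.e.\ for $n-1+(n-2)\omega>0$) is automatically implied by the stronger hypothesis $\omega>-1$, so the manipulations above never run into a sign reversal or a division by zero.
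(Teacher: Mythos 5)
Your overall strategy is the same as the paper's: rewrite (\ref{eq3.10}) so that its left-hand side is a generalized Ricci tensor (case (i)) or the Bakry-\'Emery tensor (case (ii)), verify the energy condition on the right-hand side using the $\omega$- or weak energy condition together with $\varphi>0$, and then invoke Theorem \ref{theorem2.1}. Case (ii) is handled exactly as in the paper and is fine.

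In case (i), however, there is a genuine gap at the final step, and you have in fact put your finger on it yourself. Hypothesis (1) of Theorem \ref{theorem2.1} requires a positive \emph{integer} $q$ such that ${\rm GRic}[g,\varphi,q]$ obeys the energy condition; your choice $q=1/(1+\omega)$ is generically not an integer (for every $\omega>0$ it lies in $(0,1)$, so it is never a positive integer there). The fact that Definition \ref{definition2.3} allows real $q$ does not help: the integrality restriction in the theorem is not an accident of exposition but reflects the proof mechanism (one passes to a warped product whose fibre is a round sphere $S^q$, so $q$ must be a dimension), and you cannot simply assert that ``the appeal to Case's theorem should go through verbatim'' for real $q$. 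The repair is short and is exactly what the paper does: keep $q$ a free positive integer, so that (\ref{eq3.10}) becomes (\ref{eq3.11}) with the residual term $\left(1+\omega-\frac{1}{q}\right)\frac{\nabla_i\varphi\nabla_j\varphi}{\varphi^2}$ on the right-hand side, and then choose any integer $q\ge 1/(1+\omega)$, which is possible since $\omega>-1$. That residual term contracted with $t^it^j$ equals $\left(1+\omega-\frac{1}{q}\right)(\nabla_t\varphi)^2/\varphi^2\ge 0$, so the right-hand side remains nonnegative on timelike vectors and ${\rm GRic}[g,\varphi,q]$ obeys the energy condition for an admissible integer $q$. Equivalently: ${\rm GRic}[g,\varphi,q](t,t)$ is nondecreasing in $q$, so the energy condition you establish at the real value $q_0=1/(1+\omega)$ propagates to any integer $q\ge q_0$. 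With that one line added, your argument coincides with the paper's proof.
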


\begin{proof}
Equations (\ref{eq3.6}, \ref{eq3.7}) imply (\ref{eq3.10}), which in
turn can be written as
\begin{equation}
G_{ij}\left [g,\varphi,\frac{1}q\right ]= \frac{8\pi}{\varphi}
\left ( T_{ij}-\frac{(1+\omega)}{\left [ n-1+(n-2)\omega\right ] }
g_{ij}T\right ) +\left (1+\omega-\frac1q \right )\frac{\nabla_i\varphi
\nabla_j\varphi}{\varphi^2}\ . \label{eq3.11}
\end{equation}
If $\omega>-1$, choose a positive integer $q\ge \frac{1}{1+\omega}$.
For such a $q$, and using the $\omega$-energy condition, the
right-hand side of (\ref{eq3.11}) is $\ge 0$ when contracted with
$t^it^j$ for any timelike vector $t^j$. Thus, $G_{ij}\left
[g,\varphi,\frac{1}q\right ]t^it^j\ge 0$. The result then follows
from Theorem \ref{theorem2.1}.

If instead $\omega=-1$, then taking $q\to \infty$ in (\ref{eq3.11})
and again invoking the $\omega$-energy condition, the right-hand
side of (\ref{eq3.11}) is again $\ge 0$ when contracted with
$t^it^j$ for any timelike vector $t^j$, implying now that
$B_{ij}[g,\varphi]t^it^j\ge 0$. Using the boundedness of $\varphi$,
the result again follows from Theorem \ref{theorem2.1}.
\end{proof}

\begin{proposition}\label{proposition3.4}
Let $\dim M\ge 3$. Say that either (i) for some fixed $\omega> -1$,
the pair $(g,\varphi)$ obeys the system (\ref{eq3.6}, \ref{eq3.7}),
where $T_{ij}$ obeys the $\omega$-energy condition, or (ii) for
$\omega=-1$ the pair $(g,\varphi)$ obeys the system (\ref{eq3.6},
\ref{eq3.7}) where $T_{ij}$ obeys the weak energy condition and
$\varphi\ge C>0$. If $(M,g)$ is either globally hyperbolic
or timelike geodesically complete, and admits a complete timelike
line, then $(M,g)\simeq ({\mathbb R}\times {\hat M}, -dt^2+{\hat
g})$ and $\varphi$ is constant.
\end{proposition}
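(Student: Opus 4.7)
The plan is to follow the template of the proof of Proposition \ref{proposition3.3} verbatim on the algebraic side, substituting Theorem \ref{theorem2.2} for Theorem \ref{theorem2.1} at the end. Since the extra splitting hypotheses (items (1) and (2) of Theorem \ref{theorem2.2}---global hyperbolicity or timelike geodesic completeness together with a complete timelike line) have been written directly into the statement of the proposition, I only need to verify that an appropriate generalized or Bakry-\'Emery Ricci tensor built from the Brans-Dicke data obeys the energy condition of Definition \ref{definition2.4}.

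For case (i), $\omega > -1$, I will rearrange the Brans-Dicke system (\ref{eq3.6},\ref{eq3.7}) into the form (\ref{eq3.11}) exactly as in the proof of Proposition \ref{proposition3.3}, then pick a positive integer $q \ge 1/(1+\omega)$. This choice makes the coefficient $(1+\omega-1/q)$ nonnegative, so the $\nabla\varphi\otimes\nabla\varphi$ contribution on the right-hand side of (\ref{eq3.11}) is $\ge 0$ on any vector, while the $T$-term is $\ge 0$ on timelike vectors by the $\omega$-energy condition. Hence the generalized Ricci tensor with the chosen integer parameter obeys the energy condition, fulfilling hypothesis 3(a) of Theorem \ref{theorem2.2}.

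For case (ii), $\omega = -1$, the coefficient $(1+\omega-1/q) = -1/q$ is negative for every finite $q$, so the route of case (i) breaks down and I switch to ${\rm BER}$. The observation to exploit is that at $\omega=-1$ the left-hand side of (\ref{eq3.10}) is, by inspection, exactly ${\rm BER}[g,\varphi]$ in the sense of Definition \ref{definition2.3}, so (\ref{eq3.10}) collapses to ${\rm BER}[g,\varphi] = (8\pi/\varphi)T_{ij}$. Since $\varphi>0$, the weak energy condition then yields ${\rm BER}(t,t) \ge 0$ on every timelike $t$, and the supplementary hypothesis $\varphi \ge C > 0$ supplies precisely the boundedness required by hypothesis 3(b) of Theorem \ref{theorem2.2}. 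In either case Theorem \ref{theorem2.2} then delivers the asserted isometry $(M,g) \simeq (\mathbb{R} \times \hat M, -dt^2 + \hat g)$ with $\varphi$ constant along the $\mathbb{R}$ factor.

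I do not anticipate a genuine obstacle: the algebraic manipulation is inherited wholesale from Proposition \ref{proposition3.3}, and the one mildly delicate point---the necessity of passing from the ${\rm GRic}$ formulation to the ${\rm BER}$ formulation precisely at $\omega=-1$---is also the reason the auxiliary hypothesis $\varphi\ge C>0$ must appear in case (ii) but not in case (i).
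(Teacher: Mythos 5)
Your proposal is correct and matches the paper's approach: the paper's proof of Proposition \ref{proposition3.4} is literally ``follow the same argument as in the proof of the previous proposition, but invoke Theorem \ref{theorem2.2} instead of Theorem \ref{theorem2.1},'' which is exactly what you carry out. Your only cosmetic deviation is handling $\omega=-1$ by inspecting (\ref{eq3.10}) directly rather than taking $q\to\infty$ in (\ref{eq3.11}), but these are the same computation.
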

\noindent Thus $(M,g)$ is a static solution of general relativity.
\begin{proof}
Follow the same argument as in the proof of the previous
proposition, but instead of Theorem \ref{theorem2.1}, invoke Theorem
\ref{theorem2.2}.
\end{proof}

In the case of globally hyperbolic spacetimes, a nice example of the
necessity of the assumption that the timelike line in the splitting
theorem must be complete is provided by the O'Hanlon-Tupper family
\cite{OHT}, each member of which has incomplete timelike lines and
otherwise satisfies all assumptions of Proposition
\ref{proposition3.4}, and which only splits as a warped product with
nonconstant $\varphi$ (which is monotonic along the timelike line
and not bounded away from zero).

\section{Concluding remark}
\setcounter{equation}{0}

\noindent The theorems do not cover the case of $\omega<-1$.
Consider $n=4$ dimensions. Of course, this is the physical case, and
observations currently imply that Brans-Dicke theory cannot describe
gravity in the solar system unless $\omega>4\times 10^4$ \cite{BIT}.
Nevertheless, we can consider solutions with $3/2<\omega<-1$. As
mentioned in the introduction, there are known $n=4$ nonsingular
solutions for $-3/2<\omega\le -4/3$ which obey the assumptions of
the splitting theorem, including the $\omega$-energy
condition\footnote
{Solutions are found in \cite{QBC} for values of a parameter
$\gamma$, the barotropic index of the perfect fluid, in the range
$0<\gamma<2$. For $\omega<-1$ and $n=4$ dimensions, the
$\omega$-energy condition holds when $\gamma\le 1$. For $\omega\le
-4/3$, as in the \cite{QBC} solutions, the $\omega$-energy condition
holds for $\gamma\le 5/4$.}
and the existence of a complete timelike line, but do not split
\cite{QBC} (they do split as a warped product, but not as a
product). Thus, the splitting theorem cannot be extended to all
allowed $\omega$. Whether there can be $n=4$ nonsplit, nonsingular
solutions of Brans-Dicke theory with $-4/3<\omega<-1$ is an open
question.

Finally, we return to the issue of whether the Brans-Dicke results
are genuinely unrelated to the Einstein frame. With the splitting
theorem, there is no apparent reason to question that it is an
entirely independent result. For the singularity theorem, this may
be less clear. On the one hand, by working entirely in the Jordan
frame, we see that $\omega=-1$ is singled out as a boundary case,
previous results \cite{QBC} having already indicated that there
should be a boundary case. As well, the form of the theorem that
employs assumption (c) of Theorem \ref{theorem2.1} seems
satisfactory. However, the form of the theorem that employs
assumptions (a) and (b) seem rather less satisfactory, since they
both are phrased as a condition on ${\hat \theta}$, which is related
to the null expansion $\theta$ by the same conformal transformation
that relates the Einstein and Jordan frames. Hence, these
assumptions still echo an Einstein frame formulation. Perhaps this
is the best that can be done, since it may be necessary to modify
the mean curvature of a trapped surface to overcome a defocusing
effect due to the interaction of curvature with the Brans-Dicke
scalar field.\footnote
{This can work both ways: One can have $\theta>0$ and yet ${\hat
\theta}\le 0$, trapping surfaces that would be, in general
relativity, not trapped; then the scalar field would contribute an
additional focusing, rather than defocusing, effect.}
For now, whether these assumptions can be modified to refer only to
$\theta$ remains an open question.


\begin{thebibliography}{99}
\bibitem{BIT} B Bertotti, L Iess , and P Tortora, {\it A test of
    general relativity using radio links with the Cassini spacecraft},
    Nature 425 (2003) 374-–376.
\bibitem{BD} C Brans and RH Dicke, {\it Mach's principle and a
    relativistic theory of gravitation}, Phys Rev 124 (1961)
    925--935.
\bibitem{Case} JS Case, {\it Singularity theorems and the Lorentzian
    splitting theorem for the Bakry-Emery-Ricci tensor}, J Geom Phys 60
    (2010) 477-–490.
\bibitem{Eschenburg} J-H Eschenburg, {\it The splitting theorem for
    space-times with strong energy condition}, J Diff Geom 27 (1988)
    477--491.
\bibitem{Galloway1} GJ Galloway, {\it The Lorentzian splitting
    theorem without the completeness assumption}, J Diff Geom 29
    (1989) 373--387.
\bibitem{HE} SW Hawking and GFR Ellis, {\it The large scale
    structure of space-time} (Cambridge University Press, Cambridge,
    1973).
\bibitem{Lott} J Lott, {\it Some geometric properties of the
    Bakry-\'Emery-Ricci tensor}, Comment Math Helv 78 (2003)
    865-–883.
\bibitem{OHT} J O'Hanlon and BOJ Tupper, {\it Vacuum-field
    solutions in the Brans-Dicke theory}, Il Nuovo Cimento B7 (1972)
    305-–312.
\bibitem{QBC} I Quiros, R Bonal, and R Cardenas, {\it
    Brans-Dicke-type theories and avoidance of the cosmological
    singularity}, Phys Rev D61 (2000) 124026.
\bibitem{Villani} C Villani, {\it Optimal transport}, Comprehensive
    studies in mathematics Vol 338 (Springer, Berlin, 2009).
\bibitem{Will} CM Will, {\it Theory and experiment in gravitational
    physics} (Cambridge University Press, Cambridge, 1981).
\end{thebibliography}
\end{document}